\newcolumntype{C}[1]{>{\centering\arraybackslash}p{#1}}
\title{Who Can Win a Single-Elimination Tournament?}
\author{Michael P. Kim, Warut Suksompong, and Virginia Vassilevska Williams}
\date{\today}
\begin{document}

\maketitle

\abstract{
  A single-elimination (SE) tournament is a popular way to select a winner in both sports competitions and in elections. A natural and well-studied question  is  
  the tournament fixing problem (TFP): given the set of all pairwise match outcomes, can a tournament organizer rig an SE tournament by adjusting the initial seeding so that their favorite player wins?
   We prove new sufficient conditions on the pairwise match outcome information and the favorite player, under which there is guaranteed to be a seeding where
   the player wins the tournament. Our results greatly generalize previous results.
    We also investigate the relationship between the set of players that can win an SE tournament under some seeding (so called SE winners)
    and other traditional tournament solutions. In addition, we generalize and strengthen prior work on probabilistic models for generating tournaments. For instance, we show that {\em every}
    player in an $n$ player tournament generated by 
    the Condorcet Random Model 
    will be an SE winner even when the noise is as small as possible, $p=\Theta(\ln n/n)$; prior work only had such results for $p\geq \Omega(\sqrt{\ln n/n})$.
    We also establish new results for significantly more general generative models.
}

\section{Introduction}
\label{sec:intro}

A single-elimination (SE) tournament, also known as a
binary-cup election, is a popular way to select a winner among multiple candidates/players.
In an SE tournament, pairs of players are matched according to
an initial seeding, the winners of these pairs advance to the
next round, and the losers are eliminated after a single loss.
Play continues according to the seeding until a single player, the winner,
remains.
SE tournaments are popular in sports competitions, both among fans due to their
exciting ``do-or-die" nature, and
among tournament organizers due to their efficiency.
In contrast with a round-robin tournament,
which requires $\Theta(n^2)$ matches to be played between $n$ players,
the winner of an SE tournament
is decided after a total of $n-1$ matches.  In tournaments
like the NCAA March Madness or the US Open Tennis Championships,
involving more than $64$ teams each,
the difference between a linear and quadratic number of matches
is quite pronounced.

While the efficiency of SE tournaments is desirable,
the winner of a given SE tournament can depend significantly
on the initial seeding.  A series of works \cite{lang07,HDW07,HDKW08,Vu09,virgiaaai,tournwine,tournijcai,mattei12,haziz,kvijcai15}
have investigated
how easily the winner of SE tournaments can be manipulated
simply by adjusting the seeding of the tournament.
Formally, the problem is called the \emph{tournament fixing problem} (TFP), or the \emph{agenda control problem for balanced knockout tournaments}.
In TFP, we are given a set of players $V$, information
for each pair of players $(u,w)$ about whether $u$ or
$w$ would win in a head-to-head matchup, and a player of
interest $v$; then, we are asked the following question:
is there a seeding to a balanced SE tournament where $v$ wins?
TFP is known to be $\NP$-Hard \cite{haziz} with the best-known
algorithm running in $2^n\poly(n)$ time \cite{kvijcai15}.
Thus, unless $\P = \NP$, in general, it is intractable
to determine which players can win an SE tournament.
Nevertheless, a number of works on TFP have produced
``structural results," which argue that for certain classes of
instances, one can find a winning seeding for $v$ in polynomial (and often linear) time
\cite{virgiaaai,tournijcai,kvijcai15}.
These structural results suggest that in many practical settings,
the winner of an SE tournament is susceptible to manipulation,
because many players have winning seedings that can be found
efficiently.
Furthermore, under reasonable probabilistic models for generating
tournaments, these structural results have been shown to occur
with high probability \cite{virgiaaai,tournwine}, further suggesting that
the worst-case hardness results may not apply to real-world
instances.
In other words, in many actual tournaments, it is completely
feasible for SE tournament organizers to rig the outcome
of the competition.\footnote{While the set of games to be played in a round-robin tournament are fixed, the organizers can still affect the fairness of the tournament by adjusting the schedule (see, e.g., \cite{Suksompong16}).} Experimental results \cite{russellthesis} investigate this finding in practical settings.

While TFP can be seen as a way to understand manipulation in
competition and elections,
studying conditions under which players can and cannot win SE
tournaments can also be seen as part of a larger study of
{\em tournament solutions}: different ways to define the winners of a round-robin tournament.
The input to TFP can be viewed as a \emph{tournament} 
$T = (V,E)$, or a complete, oriented graph where for all pairs of nodes
$u,w \in V$, exactly one of $(u,w)$ and $(w,u)$ is an element of
$E$; $u$ points to $w$ if $u$ would win in the match between
$u$ and $w$.
The study of tournaments is central to social choice theory;
they provide a general framework for representing
the outcomes between players in a round-robin tournament,
or more generally, pairwise preferences between alternatives, often generated from voter information.
As such, an essential question of social choice theory asks:
given a tournament, how should we select a set of winners?
SE tournaments provide one way of answering this question;
we say that a player $v \in V$ is an \emph{SE winner} if
there is some seeding, under which $v$ wins the resulting
SE tournament.  The study of tournament solutions includes many well-studied other concepts (see e.g. \cite{Laslier97,BrandtEtAlChapter}).
%as ways to determine the winners out of a group of players.
%% list some solutions
One classical example is
%For example, one well-studied solution concept is 
the Copeland
set, consisting of the players with the maximum number of wins
in the tournament.  
%A number of tournament solutions are explored in \cite{Laslier97,BrandtEtAlChapter}.
A natural question to investigate is how these traditional notions
of strength in round-robin tournaments relate to the notion of
strength in an SE tournament.

\paragraph{Results}
In this work, we improve our understanding of conditions
on the input tournament and player of interest that are
sufficient for the player to be an SE winner.
Many previous structural results involve the notion of a \emph{king},
or a player $v$ where for every other player
$u \in V \setminus \set{u}$, $v$ either beats $u$ directly,
or $v$ beats some $w$ who beats $u$.
We present a vast generalization of many of the known structural
results involving kings, showing that essentially any ``combination"
of the known sufficient conditions for a king to be an SE winner
is also sufficient for the king to be a winner.

In particular, recall
the following structural results from \cite{virgiaaai},
where given a tournament $T$ and a player $v$, we can find
a winning seeding for $v$ in polynomial time.
One class of tractable instances are those where
every player $w$, who beats $v$, wins against at most
as many players as $v$ beats.
It is also known that if $v$ is a king and wins
against more than half the players or is a
``superking" and every $w$ whom $v$ beats indirectly loses to
at least $\log{n}$ players whom $v$ beats directly,
then $v$ will be able to win an SE tournament.
While these results have been useful on their own for
showing that tournaments generated by certain random models
are likely to have many players who can win \cite{virgiaaai,tournwine},
it is natural to wonder how robust these results are to
changes in the exact sufficient conditions.
Recent results of \cite{kvijcai15} seem to suggest that
 the parameters for these structural results
are brittle; namely, when the exact parameters of the conditions
are relaxed, finding a winning seeding for $v$ (if it exists)
becomes $\NP$-Hard. In Theorem~\ref{thm:generalizedstructural},
we provide a broad generalization of the
three structural results stated above.  We show that
these conditions are actually flexible in the sense that
if the players who beat some king $v$, can be partitioned into
groups that satisfy these sufficient conditions,
then $v$ can win an SE tournament.
Additionally, we extend the work on \emph{$3$-kings}
(or players who have win-distance $\le 3$ to every other player),
introduced in \cite{kvijcai15}, and give a new set of
sufficient conditions for a $3$-king win an SE tournament.

In Section~\ref{sec:solutions}, we are able to apply these and
other known structural results to understand the relationship
between SE winners and the winners according to other tournament
solutions.  In particular, Theorem~\ref{thm:solutions}
shows that the players selected by a number of well-studied
tournament solutions are also SE winners, including the
Copeland set described above.  Another class of tournament solutions of
interest was introduced in \cite{Laslier97} as a natural extension
of the Copeland set.  In these ``iterative matrix solutions,"
we consider the tournament matrix $A$ (corresponding to the
adjacency matrix of the underlying tournament graph);
a player is included in the $k$th iterative matrix solution,
if they have the maximum number of ``wins" in $A^k$.
We give a new interpretation of this solution and use it
to show that for sufficiently large tournaments, the players
in the iterative matrix solutions will also be SE winners.

Finally, in Section~\ref{sec:probabilistic}, we investigate
probabilistic models for generating random tournaments, and
the resulting structure of such tournaments.  In particular,
we start by giving an improved result for tournaments generated
by the Condorcet Random (CR) Model.  The CR Model assumes an
underlying order to players, where stronger players generally
win against weaker players and are only upset with some small
probability $p$. We demonstrate that in tournaments generated by
the CR Model, even when the probability of upsets $p$ is
$\Theta(\ln{n}/n)$, with high probability every player in the
tournament will have a winning seeding that can be discovered
efficiently.  This upset rate $p$ is optimal (up to constant
factors) because a player needs to win $\log{n}$ matches in order
to win an SE tournament. Our result greatly improves on the previous
best result from \cite{virgiaaai}, which proves an analogous
claim for $p \ge \Omega(\sqrt{\ln{n}/n})$.
In light of this optimal result for the CR Model, we propose
a new generative model for tournaments that aims to remove
the structure that arises from assuming an underlying order
of players and a consistent noise parameter.
Despite the fact that the model may produce tournaments
with largely arbitrary structure, we are able to prove a
nontrivial result similar to the previous results on the CR Model.
The details of the model and our theorem statement are given
in Section~\ref{sec:probabilistic}.

All of our results are constructive.  In particular,
we demonstrate that certain conditions are sufficient for
a player $v$ to be an SE winner by giving algorithms, running
in polynomial time, that outputs a seeding where $v$ will
win.

\paragraph{Preliminaries and Notation}
We will assume throughout that all SE tournaments are
balanced, and played amongst a power of two, $n = 2^k$
for some $k \ge 0$, players.
Table~\ref{notation} provides a summary of the notation that
is used to refer to players and their neighborhood in the
underlying tournament.  For subsets $A,B \subseteq V$,
we say that $A$ dominates $B$, denoted $A \succ B$, if for all
$a \in A$ and all $b \in B$, $(a,b) \in E$.  We will abuse this
notation slightly, allowing individual players, rather than subsets,
to be related to other players or subsets.

\begin{table}
\centering
\begin{tabular}{|C{0.5\textwidth}|}

\hline
{\bf Notation}\\
\hline
$N_{out}(v) = \set{u : (v,u) \in E}$,\\
$N_{in}(v) = \set{u : (u,v) \in E}$\\
\hline
$out(v) = \card{N_{out}(v)},\
out_S(v) = \card{N_{out}(v) \cap S}$\\
\hline
$in(v) = \card{N_{in}(v)},\
in_S(v) = \card{N_{in}(v) \cap S}$\\
\hline
\end{tabular}
\caption{Summary of the notation used in this paper.}\label{notation}
\end{table}

Recall that we can define the notions of king and $3$-king of
a tournament in terms of the underlying tournament graph.
A \emph{king} is a player $v$ who has distance at most
$2$ to every other player $u \in V \setminus \set{v}$.
A \emph{$3$-king} is
the generalization of kings to players who have distance at most
$3$ to every other player.

In Section~\ref{sec:solutions}, a number of tournament solutions
are referenced.  Here, we provide brief descriptions of these
solutions; for complete definitions, we refer the interested
reader to \cite{BrandtEtAlChapter}.
The \emph{uncovered set} refers to the set of kings
in the tournament.
The \emph{Copeland set} is the set of players
of maximum out-degree in the tournament.
% The \emph{Slater set}
% is the set of players who are the maximum player under some strict
% linear order of minimum feedback arc set.

We say that a tournament is \emph{transitive} if we can label
the players with labels from $\set{1,\hdots,n}$
such that $\forall i,j\ i<j$
implies $i \succ j$.  Given a tournament $T$, consider flipping
edges in $T$ to produce a transitive tournament $T'$, while minimizing
the number of edges flipped. The \emph{Slater set} of $T$ is the set of
players who can be labeled $1$ (i.e., the Condorcet winner)
in such a $T'$.

The \emph{Markov set}
can be thought of as the set of players who win the most matches, in
expectation, in a ``winner-stays" tournament, where play proceeds by
repeatedly selecting a random player to play the previous winner.
This is equivalent to
finding the players of maximum probability on a random walk on the
tournament, where the graph Laplacian defines the transition matrix.

The \emph{bipartisan set} is the support of the maximal lottery
for the tournament, i.e., the support of the Nash equilibrium of the 
symmetric zero-sum game formed by the tournament matrix.

\section{Structural Results}
\label{sec:structures}

Various results are known about conditions under which a player is guaranteed to be an SE winner \cite{virgiaaai,tournijcai,kvijcai15}. Many of these results concern players who are kings. In particular, \cite{virgiaaai} showed that a ``superking'' -- a king $v$
where every player in $N_{in}(v)$ loses to at least $\log{n}$ players
from $N_{out}(v)$ -- is always an SE winner.
\cite{tournwine} showed a generalization they call a
``king of high out-degree'' -- that is, a king with out-degree
$k$, who loses to fewer than $k$ players that have out-degree greater
than $k$ -- is always an SE winner.  This result was the first to
generalize the conditions on players who can win SE tournaments.
In this section, we further generalize these results by combining their respective conditions. Moreover, we further explore the notion of 3-kings that was considered by \cite{kvijcai15} and present an alternative condition under which a 3-king can win an SE tournament.

\begin{theorem}
\label{thm:generalizedstructural}
Consider a tournament $T=(V,E)$ where $\mathcal{K}\in V$ is a king. Let $A=N_{out}(\mathcal{K})$ and $B=V\backslash(A\cup\{\mathcal{K}\})=N_{in}(\mathcal{K})$. Suppose that $B$ is a disjoint union of three (possibly empty) sets $H,I,J$ such that 
\begin{enumerate}
\item $|H|<|A|$
\item $in_A(i)\geq\log \card{V}$ for all $i\in I$ (i.e., $out_A(i)\leq|A|-\log \card{V}$ for all $i\in I$)
\item $out(j)\leq|A|$ for all $j\in J$. 
\end{enumerate}
Then $\mathcal{K}$ is an SE winner, and we can compute a winning seeding for $\mathcal{K}$ in polynomial time.
\end{theorem}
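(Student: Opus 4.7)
The plan is to construct a seeding in which $\mathcal{K}$'s opponent in every round lies in $A = N_{out}(\mathcal{K})$; since $\mathcal{K} \succ A$, this guarantees that $\mathcal{K}$ wins the tournament. Viewing the bracket as $\mathcal{K}$ together with $\log|V|$ sub-brackets of sizes $2^0, 2^1, \ldots, 2^{\log|V|-1}$ opposite $\mathcal{K}$, it suffices to arrange each sub-bracket so that its winner belongs to $A$.

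My approach is to stitch together the three known structural constructions, one per partition class of $B$. For $J$, the bound $out(j) \le |A|$ gives $in(j) \ge |B|$, so each $j \in J$ has plenty of $A$-beaters; the king-of-high-out-degree technique then places $J$-players in sub-brackets won by $A$-players (with a single $A$-player possibly eliminating several $j$'s in successive rounds of the same sub-bracket, exactly as in the classical construction). For $I$, the redundancy $in_A(i) \ge \log|V|$ lets the superking-style construction assign each $i \in I$ an $A$-eliminator that is still available in its designated round. For $H$, the surplus $|A| - |H| > 0$ provides uncommitted $A$-players to match with each $h \in H$; a suitable $A$-beater for $h$ exists because $\mathcal{K}$ is a king (the $2$-path $\mathcal{K} \to A \to h$ supplies one).

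The first technical step is a joint assignment lemma producing a map $b \mapsto (a(b), r(b))$, with $a(b) \in A$ beating $b$ and $r(b)$ the round at which $b$ is eliminated, subject to the constraint that no $A$-player is scheduled against two different $B$-players in conflicting rounds of the same sub-bracket. I would build this greedily in the order $J$, $I$, $H$, casting each stage as a bipartite matching whose Hall condition is verified from the respective hypothesis. Once the assignment is in place, each sub-bracket is formed by nesting the pairs $(a(b), b)$ at round $r(b)$ and filling the remaining slots with uncommitted $A$-players; polynomial-time bipartite matching then produces the explicit seeding.

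The main obstacle is verifying this joint Hall condition so that the three sequential matchings never exhaust the $A$-supply. I expect the tightest case to arise when $|I|$ is large: many $A$-players may be committed to eliminating $I$, leaving only the $|A| - |H|$ slack for the $H$-stage, and the $\log|V|$ redundancy per $I$-player must then be invoked to re-route conflicting $I$-assignments. A subtler point is that reuse of one $A$-player across multiple rounds of a single sub-bracket (as in the high-out-degree construction for $J$) is allowed but must be tracked carefully so that the corresponding $A$-player wins each intermediate match; verifying this monotonically through the greedy stages is the core technical content. All counting and matching steps are polynomial-time, yielding the claimed efficient seeding algorithm.
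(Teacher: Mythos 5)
Your proposal is a plan rather than a proof: the step you yourself flag as ``the core technical content'' --- the joint Hall condition guaranteeing that the three sequential matchings never exhaust the $A$-supply --- is exactly the part that is never established, and in the form you state it, it is false. Condition~1 gives only the cardinality bound $\card{H}<\card{A}$ together with (via kingship) the existence of \emph{some} $A$-beater for each $h\in H$; it does not give a system of distinct representatives. For instance, $H$ may consist of two players both beaten by a single $a\in A$ and beating every other player of $A$: no one-shot assignment of distinct $A$-eliminators to $H$ exists, even though $\card{A}-\card{H}>0$ leaves plenty of ``uncommitted'' $A$-players. Your sentence ``the surplus $\card{A}-\card{H}>0$ provides uncommitted $A$-players to match with each $h\in H$'' conflates having spare players with having matchable ones. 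The same issue recurs for $J$: $out(j)\le\card{A}$ only guarantees $in_A(j)\ge 1$ in the worst case, so ``plenty of $A$-beaters'' is an overstatement, and the reuse of one $A$-player across rounds that you invoke is precisely the delicate bookkeeping you defer.

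The paper sidesteps the global assignment problem entirely. Its proof is a round-by-round induction: in each round it takes a \emph{maximal} (not perfect) matching $M_1$ from $A$ onto $H$, then a maximal matching $M_2$ from $A\setminus M_1$ onto $I\cup J$, matches $\mathcal{K}$ to a surviving (or carefully rematched) $A$-player, pairs off the rest arbitrarily, and then verifies that all three hypotheses plus kingship hold again for the surviving half. The key counting facts are that maximality forces $\card{H'}\le\card{H}/2$ while at most two $A$-players are lost to non-$A$ opponents, so $\card{A'}\ge\card{A}/2$ and $\card{H'}<\card{A'}$ is preserved; that $in_{A'}(i)\ge\log\card{V}-1=\log\card{V'}$ since at most the player rematched to $\mathcal{K}$ is lost from $N_{in}(i)\cap A$; and that $out(j)\le\card{A'}$ survives because the rematching case only occurs when $A$ was fully used, forcing $j$ to eliminate a $B$-player. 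If you want to salvage your top-down sub-bracket decomposition, you would need to prove an analogue of these decay estimates anyway, at which point you have essentially reconstructed the paper's induction; as written, the proposal does not close the gap.
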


Note that the superking result \cite{virgiaaai} corresponds to the special case where $H=J=\emptyset$, while the ``king of high out-degree'' result \cite{tournwine} corresponds to the special case where $I=\emptyset$.

\begin{proof}
We proceed by induction, arguing that we can construct a seeding where, in each round, the three properties listed above and the fact that $\mathcal{K}$ is a king are maintained as invariants.
%We will assume that $|V|\geq 8, I \neq \emptyset, H\cup J\neq\emptyset$, and $3\leq |A|<|V|/2$. It can be shown that the theorem holds when one or more of these conditions are not fulfilled. The details are included in the full-version.
We will first take care of the cases where the tournament is small. If $|V|=1$ or 2, $B$ is empty and the result is clear.

Suppose that $|V|=4$. If $|A|\geq 2$, the result follows from \cite{tournwine}. Otherwise $|A|=1$, and $H=I=\emptyset$ and $|J|\leq 1$, which contradicts $|V|=4$. 

Suppose now that $|V|\geq 8$. If $|A|\leq 2$, then $|H|\leq 1$, $I=\emptyset$, and $|J|\leq 3$, which contradicts $|V|\geq 8$. If $I=\emptyset$, $H\cup J=\emptyset$,  or $|A|\geq|V|/2$, the result follows from \cite{tournwine} and \cite{virgiaaai}. Hence we may assume from now on that $|V|\geq 8$, $3\leq|A|<|V|/2$, $I\neq\emptyset$, and $H\cup J\neq\emptyset$. 

We will present an algorithm to compute a winning seeding for $\mathcal{K}$. The algorithm will match the players for the first round of the tournament in such a way that the leftover tournament after the first round also satisfies the conditions of the theorem. The description of the algorithm is as follows.

\begin{enumerate}
\item Perform a maximal matching $M_1$ from $A$ to $H$. 

\item Since $|H|<|A|$, we have $A\backslash M_1\neq\emptyset$. Perform a maximal matching $M_2$ (which might be an empty matching) from $A\backslash M_1$ onto $I\cup J$. 

\item If $A$ was not fully used in the two matchings, match an arbitrary unused player in $A$ with $\mathcal{K}$. Else, choose an arbitrary player $a\in A\cap M_2$ and rematch it to $\mathcal{K}$. 

\item Perform arbitrary matchings within $A,H$, and $I\cup J$. 

\item If there are leftover players, there must be exactly two of them; match them to each other.
\end{enumerate}

We prove the correctness of the algorithm by showing that the four invariants are maintained by the algorithm. Let $V',A',B',H',I',J'$ denote the subsets of $V,A,B,H,I,J$ that remain after the iteration.

\begin{enumerate}
\item $|H'|<|A'|$. We will show that $|H'|\leq|H|/2$ and $|A'|\geq|A|/2$. The claim follows since $|H|<|A|$. If $H=\emptyset$, then $|H'|<|A'|$ holds trivially, so we may assume that $H$ is nonempty. At least one player in $H$ is used in the matching $M_1$, so we have $|H'|\leq|H|/2$. We will show that the matching $M_1\cup M_2$ consists of at least two pairs. Since there can be at most two pairs in the matching provided by the algorithm in which a player in $A$ is beaten by a player outside of $A$ (i.e., the pair in which a player in $A$ is matched to $\mathcal{K}$ and the pair in which a player in $A$ is matched in the final step of the algorithm for leftover nodes), it will follow that $|A'|\geq|A|/2$.  

If $M_1$ consists of at least two pairs, we are done. Suppose that $M_1$ consists of exactly one pair. Since $|V|\geq 8$, each player in $I$ is beaten by at least three players in $A$. (Recall that $I$ is nonempty.) One of these players is possibly used in $M_1$, and one is possibly used to match with $\mathcal{K}$, but that still leaves at least one player in $A$ that beats a player in $I$. Hence $M_1\cup M_2$ consists of at least two pairs, as desired.

\item \textit{$in_{A'}(i)\geq\log \card{V'}$ for all $i\in I$}. Let $i\in I'$. Since $M_2$ is a maximal matching, every player that contributes to the in-degree of $i$ in $A$ survives the iteration, except possibly the player that is rematched to $\mathcal{K}$. Hence the in-degree of $i$ in $A'$ is at least $\log \card{V}-1=\log(\card{V}/2)$.

\item \textit{$out(j)\leq|A'|$ for all $j\in J'$}. The condition is equivalent to $out_{B'}(j)<in_{A'}(j)$. Let $j\in J'$. We have either $in_{A'}(j)=in_A(j)$ or $in_{A'}(j)=in_A(j)-1$, where the latter case occurs exactly when a player in $A$ that beats $j$ is rematched to $\mathcal{K}$. In the former case we immediately obtain $out_{B'}(j)<in_{A'}(j)$. In the latter case, $A$ has been fully used in the two matchings before one player is rematched to $\mathcal{K}$. This means that $j$ eliminates another player in $B$, and it follows that $out_{B'}(j)\leq out_B(j)-1<in_A(j)-1=in_{A'}(j)$.

\item \textit{$\mathcal{K}$ is a king}. Let $b\in B'$. If $b\in H'$, then since $M_1$ is a maximal matching, $b$ is beaten by some player in $A'$. If $b\in I'$, then since the second invariant is maintained, $b$ is beaten by some player in $A'$. Otherwise $b\in J'$. Since the third invariant is maintained, $b$ beats at most $|A'|-1$ players in $A'$, and hence $b$ is also beaten by some player in $A'$ in this case. 

%If $b$ was included in $M_2$ but later removed, then $A$ has been fully used by $M_1\cup M_2$, and hence $b$ survived by beating some other node in $B$. Since $b$ also beats $\mathcal{K}$, by counting the degree we have that $B$ loses to at least two nodes in $A$. Hence after the iteration $b$ still loses to at least one node in $A$. Otherwise $b$ was never included in $M_1\cup M_2$, and hence some node in $A$ still beats $B$.
\end{enumerate}

Hence the four invariants are maintained, and the algorithm correctly computes a winning seeding for $\mathcal{K}$.
\end{proof}

Thus, we've shown a significantly general result about kings, that holds in tournaments on $n$ players, for \emph{any} power of two, answering an open research problem posed in \cite{tournijcai} to provide more general structural results that hold independent of the size of the tournament.  (Some earlier results only hold for large $n$.)

Next, we consider the weaker notion of a 3-king. \cite{kvijcai15} presented a set of conditions under which a $3$-king is an SE winner. One of their conditions is that there exists a perfect matching from the set of nodes that are reachable in exactly two steps from the $3$-king $\mathcal{K}$ onto the set of nodes that are reachable in exactly three steps from $\mathcal{K}$. Here, we present a different set of conditions that does not include the requirement of a perfect matching.

\begin{theorem}
Consider a tournament $T=(V,E)$ where $\mathcal{K}\in V$ is a 3-king. Let $A=N_{out}(\mathcal{K}),B=N_{out}(A)\cap N_{in}(\mathcal{K})$, and $C=N_{in}(\mathcal{K})\backslash B$. Suppose that the following three conditions hold:
\begin{enumerate}
\item $|A|\geq\frac{|V|}{2}$
\item $A \succ B$
\item $|B|\geq|C|$.
\end{enumerate}
Then $\mathcal{K}$ is an SE winner, and we can compute a winning seeding for $\mathcal{K}$ in polynomial time.
\end{theorem}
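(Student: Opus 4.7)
The plan is to proceed by strong induction on $|V|$, pairing players in each round so that the surviving sub-tournament continues to satisfy the theorem's three hypotheses (together with the 3-king property on $\mathcal{K}$). This parallels the structure of the proof of Theorem~\ref{thm:generalizedstructural}, but it requires more delicate handling of the $C$-players, who are only reachable from $\mathcal{K}$ in three steps.

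For base cases I would handle small $|V|$ by direct inspection. When $C=\emptyset$, note that $\mathcal{K}\succ A\succ B$ makes $\mathcal{K}$ a king, and since every $b\in B$ has $out(b)\leq|B|<|A|$, Theorem~\ref{thm:generalizedstructural} applies with $H=B$ and $I=J=\emptyset$.

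For the inductive step with $C\neq\emptyset$, I would first compute a maximum matching $M$ in the bipartite graph with parts $B$ and $C$ and edge set $\{(b,c):b\to c\}$. Then I would pair players in round~$1$ as follows: $\mathcal{K}$ with an arbitrary $a_0\in A$; each $(b,c)\in M$ so that $b$ survives; each unmatched $b\in B$ with a distinct $a\in A\setminus\{a_0\}$ so that $a$ survives; each unmatched $c\in C$ with another unmatched $c$, using a swap to fix parity if $|C\setminus V(M)|$ is odd; and the remaining $a$-players among themselves. Conditions~(1) and~(2) of the theorem are preserved easily: $A'\succ B'$ is inherited, and the slack $|A|\geq|B|+|C|+1$ that follows from $|A|\geq|V|/2$ is enough to guarantee $|A'|\geq|V'|/2$ because each $a$-$b$ pairing contributes a surviving $A$-player. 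The 3-king property on $\mathcal{K}$ is maintained by the maximality of $M$: any $c$ that survives the round as an unmatched vertex has all of its $b$-beaters inside $M$, and those matched $b$'s survive round~$1$.

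The main obstacle will be maintaining condition~(3), namely $|B'|\geq|C'|$, when Hall's condition for the bipartite matching fails. In the extreme case a single $b^*\in B$ beats every $c\in C$ and $|M|=1$, so only one $b$ survives round~$1$ while roughly $|C|/2$ $c$-players do, and the invariant collapses. To handle this regime I would abandon pure round-by-round induction and instead build the bracket globally: dedicate a sub-bracket of size $2^s\geq|C|+1$ containing $b^*$, all of $C$, and filler players beatable by $b^*$, and arrange it so that $b^*$'s opponent in each of the $s$ rounds is a $c$ it beats, so that $b^*$ wins the sub-bracket and every $c$ is eliminated inside it; the remainder of the bracket then contains $\mathcal{K}$, the bulk of $A$, and $B\setminus\{b^*\}$, to which Theorem~\ref{thm:generalizedstructural} can be applied (taking $H$ to be these remaining in-neighbors, since $A\succ B\setminus\{b^*\}$ and $|B\setminus\{b^*\}|<|A|$). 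Showing that such a covering $b^*$ — or more generally a covering subset of $B$ — always exists when $M$ is small will lean on K\H{o}nig's theorem applied to the bipartite graph between $B$ and $C$, together with the hypothesis $|B|\geq|C|$ and the 3-king property.
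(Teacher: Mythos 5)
Your overall strategy --- one round of pairing built around a maximal matching from $B$ to $C$, chosen so that the three hypotheses plus the $3$-king property survive to the next round, followed by induction --- is exactly the paper's. But your specific pairing rule \emph{creates} the obstacle with condition (3) rather than merely encountering it, and the workaround you sketch for that regime is where the proof fails to close. The problem is your decision to eliminate every unmatched $b \in B$ by pairing it with an $A$-player: this leaves only $|B'| = |M|$ survivors in $B$ against roughly $(|C|-|M|)/2$ survivors in $C$, which is why you need $|M|$ to be large. The paper instead pairs the unmatched $B$-players \emph{with each other}, so that $|B'| \approx (|B|+|M|)/2$ while $|C'| \approx (|C|-|M|)/2$; since $|B| \ge |C|$ and $|M| \ge 1$ whenever $C \neq \emptyset$ (each $c \in C$ beats $\mathcal{K}$ and all of $A$, so it must be beaten by some $b \in B$ for $\mathcal{K}$ to be a $3$-king), condition (3) is preserved for free, with no case split on the size of the matching. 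Nor do you need the $a$-versus-$b$ wins to preserve condition (1): $|A'| \ge (|A|-1)/2$ already gives $|A'| \ge |V'|/2$.

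As for the patch you propose when the matching is small, as sketched it is far from a proof and has real difficulties of its own. A ``dedicated sub-bracket of size $2^s \ge |C|+1$'' must be an actual sub-bracket of the balanced bracket, so its winner $b^*$ goes on to play rounds $s+1,\dots,\log n$; since $b^* \in B$ beats $\mathcal{K}$, you must arrange for $b^*$ to be eliminated by an $A$-player before it can meet $\mathcal{K}$, and you must also guarantee that every $c$ in the sub-bracket is eliminated even though $b^*$ personally plays only $s$ opponents there. Moreover, K\H{o}nig's theorem yields a vertex cover that may contain $C$-vertices rather than a single covering $b^*$ (or a covering subset of $B$), so the clean picture you describe need not exist. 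None of this machinery is necessary: replacing your rule for unmatched $B$-players with the paper's (pair them among themselves) collapses your second regime entirely and turns your inductive argument into a correct, short proof.
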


\begin{proof}
If $|V|=1,2,$ or $4$, the result is clear. For $|V|\geq 8$, first perform a maximal matching from $B$ to $C$ and match $\mathcal{K}$ to an arbitrary player in $A$, and then pair off players within $A$.
If $\card{A}$ is odd, then $A \cup \set{\mathcal{K}}$ matches evenly.  Else, match the remaining $a \in A$ to some $b \in B$.
We pair off players within each of $B,C$ arbitrarily, and match the remaining pair between $B$ and $C$ if needed.  After the round, $\card{A} \ge \frac{|V|}{4}$. Since the matching from $B$ to $C$ is nonempty, we still have that $|B|\geq |C|$ after the iteration. Moreover, since we applied a maximal matching, each player in $C$ is still beaten by some player in $B$.  Thus, the required conditions are maintained as invariant, and we can efficiently compute a winning seeding for $\mathcal{K}$.
\end{proof}

% If any of the three conditions is dropped, the theorem no longer holds. If the first condition is changed to $|A|\geq\frac12|V|-1$, the theorem does not hold even when $|V|=4$. If the second condition is dropped, an example in \cite{kvijcai15} shows that the theorem does not hold even when $|A|=|V|-3$. If the third condition is dropped, consider a tournament with $|A|=3n/4-2, |B|=1$, and $|C|=n/4$. The only player in $B$ needs to beat all players in $C$ in order for $\mathcal{K}$ to be the SE winner. That would already require $\log \card{V}-1$ rounds, however, which means that the player will reach the final and beat $\mathcal{K}$ in the final.

It would be interesting to investigate the extent to which we can weaken the (very strong) second condition that all players in $A$ beat all players in $B$.
It should be noted that if any of the three conditions is removed, the theorem no longer holds.  In particular, if the second condition is dropped, a counterexample from \cite{kvijcai15} shows that for any constant $\eps > 0$, there is a tournament on $n$ players where $\mathcal{K}$ is a $3$-king, who win against $(1-\eps)n$ players, but cannot win an SE tournament.
Given that the notion of a 3-king is significantly weaker than that of a king (recall, kings who beat $\ge \card{V}/2$ players are SE winners),
% (as demonstrated by the above examples where a 3-king can win against an arbitrarily large fraction of the players but not be an SE winner),
it seems reasonable to conjecture that a strong assumption such as the second condition (or the conditions seen in \cite{kvijcai15}) may be required to prove structural results for $3$-kings.

\section{SE Winners and Tournament Solutions}
\label{sec:solutions}

Tournament solutions are functions that map each tournament graph to a subset of players, usually called the choice set. The choice set is often thought of as containing the stronger players, or ``winners," within the tournament. Many tournament solutions have been considered, including the Copeland set, the Slater set, the Markov set, and the bipartisan set \cite{Laslier97,BrandtEtAlChapter}. The ability to win an SE tournament provides us with another criterion with which we can distinguish between stronger and weaker players in a tournament.  In this section, we investigate the relationship between the set of SE winners and some traditional tournament solutions.

\begin{theorem}
A player chosen by the Copeland set, the Slater set, or the Markov set is an SE winner. A player in the bipartisan set with the highest Copeland score is also an SE winner.
\label{thm:solutions}
\end{theorem}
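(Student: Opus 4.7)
The plan is to reduce each of the four cases to a structural condition for SE-winning that is already established, most often Theorem~\ref{thm:generalizedstructural}. Given a king $v$ with $A=N_{out}(v)$ and $B=N_{in}(v)$, the simplest useful specialization of that theorem takes $H=I=\emptyset$ and $J=B$, which applies exactly when every $u\in B$ satisfies $out(u)\le|A|$. So in each case the task reduces to verifying that the chosen player is (i) a king and (ii) beaten only by players of no larger out-degree.

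The Copeland case is immediate: (ii) is the Copeland condition itself, and (i) is the classical fact that a vertex of maximum out-degree is always a king---if some $u$ were at distance more than $2$ from $v$, then $N_{out}(v)\cup\{v\}\subseteq N_{out}(u)$, contradicting maximality of $out(v)$. For the Slater case I would use the standard inclusion of the Slater set in the uncovered set to obtain (i), and argue (ii) by a swap: given a minimum-reversal transitive ordering with Slater winner $s$ at position $1$, if some $u\in N_{in}(s)$ had $out(u)>out(s)$, an ordering built by relocating $u$ to position $1$ would have a strictly smaller reversal count, since the in-edges of $u$ contribute $in(u)<in(s)$ reversals and the rearrangement cost on the tail can be controlled by comparing the positions of $s$ and $u$ in the original ordering. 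For the Markov case I would work from the stationary identity $\pi_v\cdot in(v)=\sum_{u\in N_{out}(v)}\pi_u$ implied by the winner-stays random walk; maximality of $\pi_v$ forces $in(v)\le out(v)$, and an analogous contradiction argument comparing the identities for $v$ and a putative offender $u\in N_{in}(v)$ with $out(u)>out(v)$ delivers both the king property and the out-degree bound, with any boundary exceptions absorbed into the $I$ or $H$ bucket of Theorem~\ref{thm:generalizedstructural} rather than $J$.

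The bipartisan case is the main obstacle. The king property for any $v$ in the bipartisan set comes from the standard inclusion of the bipartisan set in the uncovered set, but controlling the out-degrees of $N_{in}(v)$ is subtler because bipartisan membership is defined through the equilibrium lottery $p$ rather than directly through out-degrees. The extra Copeland-maximality hypothesis inside the bipartisan set is what closes the gap: if $u\in N_{in}(v)$ had $out(u)>out(v)$, then $u$ lies outside the bipartisan set, and the saddle-point condition $\sum_{w} p_w T_{uw}\le 0$ satisfied by every non-bipartisan player $u$ (where $T$ is the skew-symmetric tournament matrix) forces the wins of $u$ to concentrate on bipartisan members of Copeland score at least $out(v)$, contradicting $v$'s Copeland-maximality inside the bipartisan set. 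Turning this heuristic into a clean argument---extracting a quantitative out-degree bound from the saddle-point inequality and matching it against the Copeland hypothesis---is the hardest step of the proof; once (i) and (ii) are verified in each case, Theorem~\ref{thm:generalizedstructural} delivers a winning seeding in polynomial time.
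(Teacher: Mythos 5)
Your high-level strategy---place each winner in the uncovered set to get the king property, then invoke a special case of Theorem~\ref{thm:generalizedstructural}---matches the paper's, but you pick the wrong special case, and that choice creates gaps you cannot close. You specialize to $H=I=\emptyset$, $J=B$, which requires that \emph{every} player beating $v$ has out-degree at most $out(v)$. That condition is immediate for the Copeland set (your first case is fine), but for the Slater, Markov, and bipartisan cases it is a nonstandard and much stronger claim than what is actually needed, and your arguments for it do not go through. For Slater, moving $u$ to position $1$ changes the reversal count by $in_P(u)-out_P(u)$, where $P$ is the set of players originally ranked above $u$; this has no controlled relationship to $out(u)$ versus $out(s)$, so the ``swap'' does not yield a contradiction from $out(u)>out(s)$. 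For Markov, the ``analogous contradiction argument'' for an offender $u$ is not supplied. For the bipartisan case you concede the argument is a heuristic with the hardest step missing.

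The paper instead specializes to $H=B$, $I=J=\emptyset$: it suffices that $|B|<|A|$, i.e., that the king beats more than half of the other players (this is the classical result from \cite{virgiaaai}). That weaker condition is exactly what is available: it is trivial for the Copeland set, it is a known fact for the Slater set \cite{Laslier97} and for the bipartisan player of maximum Copeland score \cite{LaffondLaLe93}, and for the Markov set it follows from a three-line eigenvector computation (if $out(v)<n/2$ then the stationary equation forces $p_v<2q_{vv}p_v<p_v$). So the fix is not to patch your three incomplete arguments but to replace the target condition: prove $out(v)>in(v)$ rather than an out-degree bound on all of $N_{in}(v)$, and the whole theorem follows from results you can actually cite or verify quickly.
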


\begin{proof}
All four tournament solutions are contained in the uncovered set.
Thus, a player from these sets will be a king, so as a special case
of Theorem~\ref{thm:generalizedstructural} (or an earlier result
of \cite{virgiaaai}), it suffices to show that the relevant
players win against at least half of the remaining players.
For the Copeland set, this is trivial
\cite{Laslier97} and \cite{LaffondLaLe93} show that players
from the Slater set and the player in the bipartisan set with the
highest Copeland score, respectively, beat at least half the players.
Next, we show that players from the Markov set win against at
least half the players.

Recall that the Markov set is defined to be the set of players of maximum probability in the stationary distribution of the Markov chain defined by the normalized Laplacian matrix $Q=(q_{ij})_{n\times n}$ of the Markov chain of the tournament, where $q_{ij} = 1/n$ if $v_i$ beats $v_j$ ($0$ otherwise) and $q_{ii} = out(v_i)/n$.  Assume that the first player is in the Markov set. It follows that the probability associated with the first player in the eigenvector $p=(p_i)_{n\times 1}$ corresponding to the eigenvalue 1 is maximal. Assume for contradiction that $q_{11}<\frac{1}{2}$. We then have
\begin{align}
p_1  &= q_{11}p_1+q_{12}p_2+\ldots+q_{1n}p_n \notag \\
     &\leq q_{11}p_1+q_{12}p_1+\ldots+q_{1n}p_1 \notag \\
     &= 2q_{11}p_1 \notag \\
     &< p_1, \notag
\end{align}
a contradiction.
\end{proof}

It is not true that any player in bipartisan set is always an SE winner. Indeed, consider a transitive tournament with the slight modification that the weakest player beats the strongest player. Then the weaker player is included in the bipartisan set even though it only beats one player and cannot be an SE winner.

Another family of tournament solutions is introduced in \cite{Laslier97} as ``iterative matrix solutions''. Consider the tournament adjacency matrix $A=(a_{ij})$, in which $a_{ij}=1$ if $i$ beats $j$, and 0 otherwise. The Copeland score is given by $A\textbf{1}$. For any positive integer $k$, we consider $A^k\textbf{1}$ and include the player(s) with the maximum resulting score in our $k$th iterative tournament solution. 
% Alternatively, we can also assign $a_{ii}$ to be some fixed value $r\in[0,1]$.

There is a natural interpretation of iterative matrix solutions as the number of paths of length $k$ starting from each player. Any player in an iterative matrix solution belongs to the uncovered set. If the player $v$ is covered by some $w$ (i.e., $w \succ \set{v} \cup N_{out}(v)$), then $v$ cannot be in the iterative matrix solution.  Indeed, if $v$ is covered by $w$, then the first steps of the paths starting from $w$ form a superset of the first steps of the paths starting from $v$. On the other hand, it is not the case that any player in an iterative matrix solution always beats at least half of the remaining players, as shown by the following example.

Consider $k=2$ and the tournament with player set $V=A\cup B\cup\{x\}$, where $A\approx rn$ and $B\approx (1-r)n$ with $\frac12<r<\frac{1}{\sqrt{3}}$. Suppose that $A\succ x\succ B\succ A$, and $A$ and $B$ are close to regular. The Copeland scores of $a\in A, b\in B, x$ are $\frac{rn}{2},\frac{(1+r)n}{2},(1-r)n$, respectively. It follows that the iterative matrix scores of $a,b,x$ are $\frac{r^2n^2}{4},\frac{(1+r^2)n^2}{4},\frac{(1-r^2)n^2}{2}$, respectively. This implies that $x$ has the maximum iterative matrix score but beats fewer than half of the remaining players.

Nevertheless, we will show that for large enough tournaments, players in an iterative matrix solution are always SE winners. First we need the following lemma and the subsequent corollary.

\begin{lemma}
\label{lemma:minkpaths}
In a tournament with $n$ players, the minimum possible number of $k$-paths is $\binom{n}{k+1}$.
\end{lemma}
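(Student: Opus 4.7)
The plan is to reduce the lemma to R\'edei's classical theorem: every tournament contains at least one Hamiltonian directed path. The key observation is that a $k$-path---a simple directed path consisting of $k$ edges on $k+1$ distinct vertices---is precisely a Hamiltonian path in the sub-tournament induced by its vertex set, and conversely, each Hamiltonian path in an induced $(k+1)$-vertex sub-tournament yields a $k$-path in $T$.

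Given this correspondence, the next step is to count $k$-paths by partitioning them according to their vertex set. For each $(k+1)$-subset $S \subseteq V$, let $h(T[S])$ denote the number of Hamiltonian paths in the induced sub-tournament $T[S]$. Then the total number of $k$-paths in $T$ equals $\sum_{S} h(T[S])$, where the sum ranges over the $\binom{n}{k+1}$ subsets of size $k+1$. Applying R\'edei's theorem to each $T[S]$ gives $h(T[S]) \geq 1$, so the total is at least $\binom{n}{k+1}$.

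To see that this bound is tight, I would exhibit the transitive tournament on $\set{1, \ldots, n}$ with $i \to j$ iff $i < j$. Every induced sub-tournament is itself transitive and admits a unique Hamiltonian path (the one respecting the natural order of labels), so the total $k$-path count is exactly $\binom{n}{k+1}$, matching the lower bound.

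There is no substantive obstacle beyond invoking R\'edei's theorem; the content of the proof is the bijection between $k$-paths of $T$ and Hamiltonian paths of $(k+1)$-vertex induced sub-tournaments, which converts the minimization problem into a term-by-term lower bound via a well-known fact.
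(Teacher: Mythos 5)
Your proof is correct and takes essentially the same approach as the paper: both partition the $k$-paths by their $(k+1)$-element vertex set, lower-bound each induced sub-tournament's contribution by one Hamiltonian path (the paper does this via ``a simple inductive argument,'' which is exactly the standard proof of R\'edei's theorem you cite by name), and verify tightness on the transitive tournament.
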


\begin{proof}
In a transitive tournament, each subset of size $k+1$ gives rise to exactly one $k$-path. On the other hand, by a simple inductive argument, each subset of size $k+1$ gives rise to at least one $k$-path that goes through all $k+1$ players. The result follows immediately.
\end{proof}

% The following corollary is an immediate consequence of Lemma \ref{lemma:minkpaths}.

\begin{corollary}
\label{cor:maxkpaths}
In a tournament with $n$ players, a player with the maximum number of $k$-paths originating from it is the origin of at least $\frac1n\binom{n}{k+1}$ $k$-paths.
\end{corollary}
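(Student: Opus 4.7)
The plan is to derive this immediately from Lemma \ref{lemma:minkpaths} by a simple averaging (pigeonhole) argument. First, I would observe that every $k$-path in the tournament has a unique origin, namely its first vertex, so the total count of $k$-paths in $T$ is exactly the sum over all $v \in V$ of the number of $k$-paths originating from $v$. Applying Lemma \ref{lemma:minkpaths} then gives that this sum is at least $\binom{n}{k+1}$.

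Next, I would invoke pigeonhole: since the sum of $n$ nonnegative integers is at least $\binom{n}{k+1}$, at least one of the summands must be at least the average $\frac{1}{n}\binom{n}{k+1}$. Taking this summand corresponds to some player $v^\star \in V$ whose number of originating $k$-paths is at least $\frac{1}{n}\binom{n}{k+1}$; the player with the maximum number of originating $k$-paths has at least as many, which yields the corollary.

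I do not anticipate any real obstacle here — the only subtlety worth a single line of justification is that the notion of ``origin'' partitions the multiset of $k$-paths according to their starting vertex (so we are not double-counting), which is immediate from the definition of a directed path. No floors or rounding are needed in the statement since the bound is stated as a real-valued lower bound, so the averaging step goes through verbatim.
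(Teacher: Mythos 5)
Your averaging argument is correct and is exactly the intended derivation: the paper states the corollary without proof, leaving implicit precisely this pigeonhole step from Lemma~\ref{lemma:minkpaths} (each $k$-path has a unique origin, so some player originates at least the average $\frac{1}{n}\binom{n}{k+1}$, and the maximizer does at least as well). No gaps.
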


We are now ready to prove the theorem.

\begin{theorem}
For any fixed $k$, there exists a constant $N_k$ such that for any tournament of size at least $N_k$, a player with the maximum number of $k$-paths originating from it is an SE winner.
\label{thm:kpaths}
\end{theorem}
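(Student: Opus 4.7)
My plan is to combine $v$'s membership in the uncovered set with Theorem~\ref{thm:generalizedstructural}, applied to a canonical partition of $N_{in}(v)$. The $k$-path lower bound from Corollary~\ref{cor:maxkpaths} will do the quantitative work of showing that $N_{out}(v)$ is large, and the maximality of the $k$-path count at $v$ will be used to control the ``bad'' part of $N_{in}(v)$.

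First I would record that $v$ lies in the uncovered set by the covering argument in the paragraph preceding Lemma~\ref{lemma:minkpaths}: a cover $w$ of $v$ would force the $k$-path count of $w$ to strictly exceed that of $v$, contradicting maximality; hence $v$ is a king, supplying the king hypothesis of Theorem~\ref{thm:generalizedstructural}. Second, writing $A := N_{out}(v)$ and letting $p_k(v)$ denote the number of $k$-paths from $v$, I would lower bound $|A|$ by combining the decomposition $p_k(v) = \sum_{a \in A} p_{k-1}(a) \leq |A| \cdot n^{k-1}$ with Corollary~\ref{cor:maxkpaths}'s bound $p_k(v) \geq \binom{n}{k+1}/n$. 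This yields $|A| \geq \binom{n}{k+1}/n^k = \Omega(n/(k+1)!)$, so $|A| \geq c_k n$ for a positive constant $c_k$ depending only on $k$.

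The third step is to partition $B := N_{in}(v)$ as
\[
I = \{b \in B : in_A(b) \geq \log n\}, \quad J = \{b \in B \setminus I : out(b) \leq |A|\}, \quad H = B \setminus (I \cup J),
\]
so that conditions~2 and~3 of Theorem~\ref{thm:generalizedstructural} hold by construction and everything reduces to showing $|H| < |A|$. For each $u \in H$, the inequality $p_k(u) \leq p_k(v)$, decomposed by the first edge of each walk from $u$ and using $u \to v$, rearranges to
\[
\sum_{a \in A \setminus N_{out}(u)} p_{k-1}(a) \;\geq\; p_{k-1}(v) + \sum_{b \in N_{out}(u) \cap B} p_{k-1}(b),
\]
where the set $A \setminus N_{out}(u)$ on the left has size $in_A(u) < \log n$. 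Summing this over $u \in H$ and exchanging the order of summation gives $\sum_{a \in A} \mu(a)\, p_{k-1}(a) \geq |H|\, p_{k-1}(v)$, where $\mu(a) := |\{u \in H : a \to u\}|$ satisfies $\sum_a \mu(a) < |H| \log n$.

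The main obstacle will be converting this per-$u$ constraint into a global upper bound on $|H|$, because $v$ only maximizes $p_k$, not $p_{k-1}$: comparing the two sides directly yields only the necessary relation $p_{k-1}(v) \leq \log n \cdot \max_{a \in A} p_{k-1}(a)$, which is not yet a bound on $|H|$. I would close the gap by a case split on this quantity. If $p_{k-1}(v) > (\log n) \cdot \max_{a \in A} p_{k-1}(a)$, then the per-$u$ inequality already fails, forcing $H = \emptyset$. Otherwise some $a^* \in A$ carries $\Omega(p_{k-1}(v)/\log n)$ of the $p_{k-1}$-mass in $A$, and a refined counting -- which also invokes the easy observation $out_B(u) \geq in_A(u)$ for $u \in H$ (a consequence of $out(u) > |A|$) to recover the discarded right-hand term -- forces the $H$-edges from $A$ to cluster on a small ``heavy'' subset of $A$, producing an $O(n/\log n)$ bound on $|H|$. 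Since $|A| = \Omega(n)$, this is strictly smaller than $|A|$ for all $n \geq N_k$, completing the application of Theorem~\ref{thm:generalizedstructural}.
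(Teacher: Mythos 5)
Your setup matches the paper's: $v$ is uncovered (hence a king), $|A| = |N_{out}(v)| = \Omega_k(n)$ via Corollary~\ref{cor:maxkpaths}, and everything reduces to showing that the set $H$ of players who beat $v$ and have in-degree from $A$ less than $\log n$ satisfies $|H| < |A|$. The gap is in that last step. Your summed inequality $\sum_{a \in A}\mu(a)\,p_{k-1}(a) \ge |H|\,p_{k-1}(v)$ with $\sum_a \mu(a) < |H|\log n$ genuinely does not bound $|H|$: since $\mu(a)\le |H|$ for each $a$, the inequality is already satisfied whenever a single $a^*\in A$ that beats many members of $H$ has $p_{k-1}(a^*) \ge p_{k-1}(v)$, a situation your constraints do not exclude and which says nothing about the size of $H$. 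The second branch of your case split (the ``refined counting'' that is supposed to force the $H$-edges to cluster on a heavy subset of $A$ and yield $|H| = O(n/\log n)$) is precisely the assertion that needs proof, and I do not see how to extract it from the per-$u$ inequalities alone; the term $\sum_{b\in N_{out}(u)\cap B} p_{k-1}(b)$ you propose to recover has no useful individual lower bound, since a player in $B$ may originate very few $(k-1)$-paths.

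The missing idea --- and the paper's actual argument --- is to stop reasoning globally over $H$ and instead suppose for contradiction that $|H|\ge |A| = \Omega_k(n)$, then apply Corollary~\ref{cor:maxkpaths} to the subtournament induced on $H$. This produces a single witness $h\in H$ originating at least $\frac{1}{|H|}\binom{|H|}{k+1} = \Omega_k(n^k)$ $k$-paths lying entirely inside $H$, none of which are shared with $v$. Since $h$ beats all but fewer than $\log n$ players of $A$, at most $\log n\cdot n^{k-1}$ of $v$'s $k$-paths fail to be inherited by $h$, so $p_k(h) \ge p_k(v) - \log n\cdot n^{k-1} + \Omega_k(n^k) > p_k(v)$ for $n \ge N_k$, contradicting the maximality of $v$. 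Your proposal never invokes Lemma~\ref{lemma:minkpaths} on the induced subtournament, and that extremal choice of $h$ is the crux; without it the argument does not close.
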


\begin{proof}
Let $v$ be a player with the maximum number of $k$-paths originating from it, and let $A$ and $B$ be the sets of players who lose to $v$ and who beat $v$, respectively. From Corollary \ref{cor:maxkpaths}, $v$ is the origin of at least $\frac1n\binom{n}{k+1} \ge \frac{n^k}{2(k+1)!}$ $k$-paths for large enough $n$. Hence it must have out-degree at least $\frac{n}{2(k+1)!}$. In other words, $|A|\geq\frac{n}{2(k+1)!}$. 

If the number of players in $B$ with in-degree from $A$ less than $\log n$ is less than $|A|$, we can apply Theorem \ref{thm:generalizedstructural}. Otherwise, there are at least $|A|\geq\frac{n}{2(k+1)!}$ players in $B$ with in-degree from $A$ less than $\log n$. Call this set $H$, and consider a player $h\in H$. Since $h$ beats all but at most $\log n$ players in $A$, we can compare the number of $k$-paths originating from $v$ with the number of $k$-paths originating from $h$ by removing the common $k$-paths. The remaining number of $k$-paths originating from $v$ is at most $\log n\cdot n^{k-1}$, while by Corollary \ref{cor:maxkpaths} again, a player in $H$ with the maximum number of $k$-paths within $H$ is the origin of at least $O(n^k)$ $k$-paths, since $|H|$ is linear in $n$. This contradicts the assumption that $v$ has the maximum number of $k$-paths originating from it.
\end{proof}

\subsection{The strength of kings}
Since results concerning SE winners often involve the assumption that a player is a king in the given tournament, one might hope that there is a strong relation between SE winners and the uncovered set. For example, it could always be that a constant fraction of players in the uncovered set are SE winners, or vice versa. This is not the case, however, as the following theorem shows.

\begin{theorem}
Let $r\in(0,1)$. There exists a tournament such that the proportion of players in the uncovered set that are SE winners is less than $r$ and the proportion of SE winners that are contained in the uncovered set is also less than $r$.
\end{theorem}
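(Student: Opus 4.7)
The plan is to construct, for each $r \in (0,1)$, a tournament $T$ on some power-of-two $n = n(r)$ players with $|U \cap S|/|U| < r$ and $|U \cap S|/|S| < r$, where $U$ denotes the uncovered set and $S$ the set of SE winners. I would do this by exhibiting a family of tournaments $T_n$ in which both ratios shrink to $0$ as $n \to \infty$; for any fixed $r$, taking $n$ large enough then yields the desired $T$.

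The construction combines two disjoint gadgets. The first supplies a linear-sized set $K$ of \emph{weak kings}: each $k \in K$ beats only a small set of ``helper'' vertices (fewer than $\log n$ of them) whose joint out-neighborhoods cover the rest of $V$ via paths of length $2$. Thus $k$ is a king (so $k \in U$), but the out-degree bound precludes $k$ from being an SE winner, since winning $\log n$ rounds requires $\log n$ distinct beatable opponents. The second gadget supplies a linear-sized set $M$ of \emph{covered SE winners}: each $m \in M$ is dominated by a designated coverer $w_m$ (so $m \notin U$), but is given enough well-chosen out-neighbors that an explicit bracket---placing $w_m$ on the opposite half and arranging for $w_m$ to be eliminated there by a player whom $m$ beats in the final---witnesses $m \in S$.

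To assemble the full tournament $T_n$, I would place $K$, $M$, the required helpers and coverers, and padding vertices side-by-side and orient all cross-edges deterministically to preserve both properties: weak kings lose to every vertex outside their helper set (keeping out-degrees below $\log n$ without disturbing king status), each coverer $w_m$ dominates $\{m\} \cup N_{\mathrm{out}}(m)$ in $T$ (preserving coverage in the combined graph), and padding vertices lose to every $m \in M$ (so they populate $m$'s half of its bracket harmlessly). With $|K|$ and $|M|$ both linear in $n$, the overlap $U \cap S$ is forced to be vanishingly small relative to each, so both ratios decay like $O(1/n)$, which is less than $r$ for $n$ large.

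The main obstacle lies in verifying simultaneously, in the combined tournament, that (i) every $k \in K$ remains a king (its helpers must still cover all of $V$), (ii) every $k$ still has out-degree below $\log n$ (the cross-edges must be oriented in $k$'s favor), (iii) every $m \in M$ is still covered by $w_m$ (no cross-edge gives $m$ a new out-neighbor that escapes $w_m$'s dominion), and (iv) the exhibited bracket for each $m \in M$ still produces an SE win in the full tournament. Step (iv) is the most delicate and is handled by an explicit round-by-round verification: $m$'s half is populated entirely by padding and by $K$-vertices, all of which $m$ beats, while the opposite half is arranged so that $w_m$ reaches the final of that half and is eliminated there by a designated eliminator whom $m$ then beats in the tournament final.
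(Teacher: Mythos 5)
Your high-level strategy is the same as the paper's: exhibit one set of players that are kings but have out-degree too small to win $\log n$ rounds, and a second, much larger set of players that are covered yet win via an explicit bracket. However, as written the proposal has two problems.

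First, a quantitative claim is false: you cannot have a \emph{linear-sized} set $K$ in which every player has out-degree below $\log n$. Restricting attention to the sub-tournament on $K$ alone, the average out-degree within $K$ is $(|K|-1)/2$, so some member of $K$ already beats $(|K|-1)/2$ players inside $K$; hence $|K| \le 2\log n + 1$. Consequently both ratios cannot decay like $O(1/n)$. This is fixable --- the first ratio only needs $|K| \to \infty$, so $|K| = \Theta(\log n)$ suffices, which is exactly what the paper does (its set $A$ of weak kings has size $k < \log n$ with $k \to \infty$, giving ratio $2/(k+2)$, while the covered winners $B$ have size $n$, giving ratio $2/(n+2)$) --- but the claim as stated must be repaired.

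Second, the proof never materializes: you correctly identify that the crux is verifying conditions (i)--(iv) simultaneously in a single concrete tournament, and then assert that this ``is handled by an explicit round-by-round verification'' without giving the tournament or the verification. There are real tensions to discharge here --- e.g., each weak king's helpers must beat every $m \in M$ (else the king cannot reach $m$ in two steps), while each $m$ must beat enough vertices to fill half a bracket, and each coverer $w_m$ must dominate all of $N_{out}(m)$ --- and resolving them is the entire content of the theorem. The paper's construction shows this can be done with almost no bookkeeping: take $V = A \cup B \cup \{x,y\}$ with $x \succ y, B$; $y \succ B, A$; $B \succ A$; $A \succ x$. Then $x$ is a single hub making every $a \in A$ a king ($a \to x \to$ everything), $y$ is a single coverer eliminating all of $B$ from the uncovered set, $B$ regular plus symmetry gives one bracket argument for all of $B$ at once ($x$ eliminates $y$, then some $a \in A$ eliminates $x$), and $|A| < \log n$ kills $A$ as SE winners. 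You should either adopt such a concrete construction or fully specify and verify your own.
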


\begin{proof}
Consider a tournament with player set $V=A\cup B\cup\{x,y\}$ such that 
\begin{itemize}
\item $x\succ y,B$
\item $y\succ B,A$
\item $B\succ A$
\item $A\succ x$.
\end{itemize}
The uncovered set is $A\cup\{x,y\}$.

Let $|A|=k$ and $|B|=n$. If $k<\log n$, then players in $A$ do not win enough matches to become an SE winner. Hence the proportion of players in the uncovered set that are SE winners is at most $\frac{2}{k+2}$. 

On the other hand, suppose that $B$ is a regular tournament with all players isomorphic. By symmetry, if one player in $B$ is an SE winner, then all of them are. In order for a player in $B$ to be an SE winner, players $x$ and $y$ need to be eliminated. But this can easily be done in two rounds, with $x$ beating $y$ in the first round and a player in $A$ beating $x$ in the second round. Hence the proportion of SE winners that are contained in the uncovered set is at most $\frac{2}{n+2}$.

Taking $k$ and $n$ large enough with $k<\log n$, we obtain the desired result.
\end{proof}

\section{Generative Models for Tournaments}
\label{sec:probabilistic}

Recall the Condorcet Random (CR) Model, studied in
\cite{BrM08,virgiaaai,tournwine}.  In the CR Model,
we assume there is an underlying ordering to the players,
and that, in general, stronger players win against weaker
players; however, with some small probability $p < 1/2$, the
weaker player will upset the stronger player.  In the
corresponding tournament graph, we say that for two players
$i,j$ such that $i$ occurs before $j$ in the ordering, $(i,j) \in E$ with probability $1-p$ and $(j,i) \in E$
otherwise.
A number of results are known about which players are SE winners
in tournaments drawn from a CR Model.  \cite{virgiaaai}
first showed that when $p \ge \Omega(\sqrt{\ln{n}/n})$, then
with high probability, every player in the tournament will be a
superking, and therefore an SE winner.  \cite{tournwine} shows
that even when $p \ge C \ln{n}/n$, roughly the first half of players
will be SE winners, and more generally if $p = C\cdot 2^i \ln{n}/n$,
then roughly the first $1-1/2^{i+1}$ fraction of players are SE winners.
\cite{tournwine,kvijcai15} also study various generalizations
of the CR Model.

In this section, we present improved results about tournaments
generated by the standard CR Model, showing that with high probability,
every player in a CR tournament will be an SE winner,
even with the noise $p = \Theta(\ln{n}/n)$ (with no dependence on
the player's rank).

\begin{theorem}
Let $C \ge 64$ be a constant and $p \ge C\ln{n}/n$.
Let $T$ be a tournament generated by the CR Model with noise parameter
$p$ on $n > n_C$ players (for some constant $n_C$).  With
probability $\ge 1-1/\Omega(n^2)$, every player has an efficiently-computable
winning seeding over $T$.
\label{thm:cr-all}\end{theorem}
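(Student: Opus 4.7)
The plan is to apply Theorem~\ref{thm:generalizedstructural} to each player individually and then union-bound over the $n$ players. Fix $v$ of rank $r$ (with rank $1$ being strongest) and set $A = N_{out}(v)$, $B = N_{in}(v)$. I propose the natural partition $I = \{b \in B : in_A(b) \geq \log n\}$, $J = \{b \in B \setminus I : out(b) \leq |A|\}$, and $H = B \setminus (I \cup J)$, so that conditions (2) and (3) of Theorem~\ref{thm:generalizedstructural} are immediate by construction. What remains is to show $|H| < |A|$ and that $v$ is a king, each with failure probability $O(1/n^3)$, so that the union bound over the $n$ choices of $v$ yields the $1 - 1/\Omega(n^2)$ guarantee.

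Every relevant quantity is a sum of independent indicators in the CR Model. A short calculation gives $E[|A|] = p(r-1) + (1-p)(n-r)$ and $E[out(b)] = p(r_b - 1) + (1-p)(n - r_b)$ for $b$ of rank $r_b$, together with an explicit formula for $E[in_A(b)]$ obtained by splitting on the rank of each candidate $a$ relative to both $v$ and $b$ and using that the edges $(v,a)$ and $(a,b)$ are independent. Standard Chernoff bounds yield $(1 \pm 1/2)$-concentration around each mean with failure probability $\leq 1/n^5$ whenever the mean is $\Omega(\log n)$; the constant $C \geq 64$ is precisely chosen so that these tails remain small enough for a union bound over the $O(n^2)$ relevant events per $v$ to cost only $O(1/n^3)$.

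Plugging in these concentrations: for $b$ weaker than $v$, one gets $E[out(b)] < E[|A|]$ with a linear gap in the rank difference, so $b \in J$ w.h.p.; for $b$ stronger than $v$ whose rank $r_b$ is not too small, $E[in_A(b)]$ is $\omega(\log n)$, so $b \in I$ w.h.p. The players escaping both $I$ and $J$ form a ``strong tail'' of $B$ of size roughly $2n/C - (n-r)$, which is empty once $v$ lies in the top $(C-2)/C$ fraction by rank. The main obstacle is that for $v$ in the bottom $2/C$ fraction, this strong tail threatens to exceed the small $|A| \approx C \ln n$, spoiling the hypothesis $|H| < |A|$.

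To overcome this for weak $v$, I combine two complementary moves. First, a sharper bound on $E[in_A(b)]$ that tracks the random composition of $A$ by rank buckets (rather than using only its size) shrinks the strong tail significantly, handling all but the most extreme weak players. Second, for the bottom $O(n/C)$ players where even this fails, I bypass Theorem~\ref{thm:generalizedstructural} in favor of a recursive bracket construction: since $|A| \geq 32 \ln n$ holds w.h.p.\ even for the weakest $v$, pick $a_1, \ldots, a_{\log n} \in A$ of increasing strength, place $v$ to meet $a_i$ in round $i$, and invoke Theorem~\ref{thm:generalizedstructural} on each sibling sub-bracket of size $2^{i-1}$ to certify $a_i$ as its winner; inside such a sub-bracket, condition (2) becomes the much weaker $\log(2^{i-1}) = i-1$, which absorbs the strong-tail obstruction. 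The king property for $v$ falls out of the same $I, J, H$ analysis (every $b \in B$ is beaten by some $a \in A$), and all steps are constructive, giving the polynomial-time seeding.
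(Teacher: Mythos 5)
Your strategy for players of rank at most $(1-\Theta(1/C))n$ --- defining $I$, $J$, $H$ so that conditions (2) and (3) of Theorem~\ref{thm:generalizedstructural} hold by construction and then checking $|H|<|A|$ and kingship by Chernoff bounds --- is plausible and genuinely different from the paper, which does not split players into easy and hard cases at all. But the crux of Theorem~\ref{thm:cr-all} is precisely the bottom $\Theta(n/C)$ players, for whom $|A|=\Theta(\log n)$, and there your argument has a real gap. The recursive bracket construction requires partitioning the roughly $n-O(\log n)$ players who \emph{beat} $v$ into sub-brackets of sizes $1,2,4,\dots,n/2$, each won by a designated $a_i\in N_{out}(v)$, with the hypotheses of Theorem~\ref{thm:generalizedstructural} holding \emph{inside each sub-bracket}. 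You give no procedure for this allocation and no argument that a valid one exists: the strongest players of the tournament must land in some sub-bracket, and whichever $a_i$ receives them must either beat them, or have them lose to enough of $N_{out}(a_i)$ restricted to that sub-bracket, or absorb them into an $H$-set smaller than $a_i$'s out-neighborhood there --- a global combinatorial problem, not a per-player concentration bound. ``Absorbs the strong-tail obstruction'' is an assertion, not a proof, and the ``sharper bound on $E[in_A(b)]$ tracking rank buckets'' is likewise unquantified.

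For comparison, the paper's proof handles every player (the weakest player $w$ being the worst case) with one two-part decomposition. It shows $w$ beats some player $v$ of rank at most $n/6$, who with high probability is a king of out-degree at least $n/4$ over the first half and hence wins that half by \cite{virgiaaai}; and it shows $w$ beats more than $k\log n$ players from the first half, which it \emph{swaps} into the second half so that every second-half player loses to at least $\log n$ of them, making $w$ a superking --- and hence an SE winner by \cite{virgiaaai} --- over the second half; $w$ then beats $v$ in the final. The swap is the idea your proposal is missing: it converts the hard allocation problem for weak players into two invocations of known structural results. Without it, or an equally concrete substitute, your treatment of the weak players does not go through.
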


Note that this result is asymptotically optimal, as a player
must have at least $\log{n}$ wins to be able to win an SE tournament.
If $p = o(\ln{n}/n)$, then with high probability,
the weakest player will not be able to win an SE tournament,
regardless of the seeding. The case
where $p \ge C\sqrt{\ln{n}/n}$ is covered in \cite{virgiaaai}, which
shows that every player in such a tournament is an SE winner.
%Here, we give a sketch of the proof of Theorem~\ref{thm:cr-all};
%the detailed proof appears in the full-version.
% While this result indicates the depth of our understanding
% of conditions under which a player is an SE winner,
% it also suggests that the assumption tournaments are drawn
% from a CR Model, where the noise parameter $p$ is fixed for
% all matchups, may be too strong.
% To address these concerns, we describe a less restrictive model
% for generating tournaments, where for each node, only a
% $(1/2+\Delta)$-fraction of the edges for $0 < \Delta < 1/2$,
% are drawn randomly;
% the remaining edges are arbitrary.
% In this new model, many typical
% arguments used in analyzing CR tournaments, including those used
% in the proof of Theorem~\ref{thm:cr-all}, which hinge on the precise
% definition of the CR Model, break down.
% We then give a setting of the parameters in this more realistic
% model where every player can win the tournament.  We give the precise
% description of the model and the theorem in what follows.

% \subsection{Improved Result for the CR Model}

% We argue that for $C \ge 64$ and sufficiently large $n > n_C$,
% even the weakest player in the a tournament of $n$ players
% generated by the CR Model with $C\ln{n}/n \le p \le C\sqrt{\ln{n}/n}$,
% will be an SE winner with high probability.

The proof will use the following concentration bound, which can easily be derived from
standard Chernoff-Hoeffding bounds.

\begin{lemma}
Let $X_1, \hdots, X_n$ be independent random variables
with $X = \sum_i X_i$ and $E[X] = \mu$.
Suppose $d \le \mu$.  Then
$\Pr{X < (1-\delta)d} \le \exp(-\delta^2d/2)$.
\label{lem:varchernoff}\end{lemma}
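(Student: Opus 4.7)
The plan is to reduce the claimed bound in terms of $d$ to the standard Chernoff lower-tail bound in terms of $\mu$ by rewriting the deviation threshold. Specifically, I will apply the textbook Chernoff-Hoeffding inequality for sums of independent $[0,1]$-valued random variables, which says that for any $\delta' \in (0,1)$,
\[
\Pr[X < (1-\delta')\mu] \le \exp\!\bigl(-(\delta')^2 \mu/2\bigr),
\]
and then choose $\delta'$ so that the event $\{X < (1-\delta')\mu\}$ coincides with the event $\{X < (1-\delta)d\}$ in the statement.

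The key choice is $\delta' = 1 - (1-\delta)d/\mu$, which rearranges to $(1-\delta')\mu = (1-\delta)d$. Because $d \le \mu$ and $\delta \in (0,1)$, a quick check shows $\delta' \in (0,1)$, so the standard bound applies. Equivalently, one may write $\delta'\mu = (\mu-d) + \delta d$, which will be convenient in the estimation step below. Substituting gives
\[
\Pr[X < (1-\delta)d] \;=\; \Pr[X < (1-\delta')\mu] \;\le\; \exp\!\bigl(-(\delta')^2\mu/2\bigr),
\]
so it only remains to verify that $(\delta')^2\mu \ge \delta^2 d$ to conclude the lemma.

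The remaining inequality is elementary algebra: using $\delta'\mu = (\mu-d)+\delta d$, the inequality $(\delta')^2\mu \ge \delta^2 d$ is equivalent, after multiplying through by $\mu$ and expanding, to
\[
(\mu-d)\bigl[(\mu-d) + \delta d\,(2-\delta)\bigr] \;\ge\; 0,
\]
which holds because $\mu \ge d$ and $\delta \in (0,1)$ make both factors nonnegative. I would present this as a one-line algebraic verification after introducing $\delta'$.

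I do not expect any real obstacle; the only subtlety is the change of variables $\delta \mapsto \delta'$, which is the whole content of the reduction. I would also briefly note the standing assumption on the $X_i$ (bounded in $[0,1]$, e.g.\ indicators of the events counted in the applications of this lemma), which is what makes the underlying Chernoff-Hoeffding bound applicable.
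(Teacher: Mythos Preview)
Your proposal is correct and matches the paper's approach: the paper does not give a detailed proof but simply states that the bound ``can easily be derived from standard Chernoff--Hoeffding bounds,'' and your reduction via the change of variables $\delta' = 1-(1-\delta)d/\mu$ is exactly such a derivation. The algebraic verification that $(\delta')^2\mu \ge \delta^2 d$ is sound, and your remark about the implicit $[0,1]$-boundedness assumption on the $X_i$ is a sensible clarification that the paper leaves unstated.
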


We give a sketch of the proof before proceeding to the full proof.  First,
we argue that the weakest player $w$ will win against more than
$k \log{n}$ players in the first half, for some constant
$k$. We will think of ``swapping" $k \log{n}$ of these
losers, which we call $S$, from the first half with some arbitrary set of players
from the bottom half (so that these losers become some of the
strongest players over the second half).
Then, we argue that at least one player $v$ that $w$ beats
will be in the first $n/6$ players.
This player, with high probability, will be a king over the first
half of players, who wins against more than half the players;
thus, by \cite{virgiaaai}, this player will be an SE winner over
the first half of players.  Next, we argue that for some arbitrary
player $u$ in the weaker half of players, at least $\log{n}$ players
from the $k\log{n}$ that were swapped to the second half will beat $u$.
We then take a union bound over the players in the second half,
and argue that $w$ will be a superking over the second half, and
again by \cite{virgiaaai}, an SE winner over the second half.
Thus, $w$ will be an SE winner over the entire tournament by
winning over the weaker half, while $v$ wins against the stronger
half, and $w$ wins against $v$ in the final round.
We take a union bound over all players to arrive at the desired
result.

The detailed proof follows.

\begin{proofof}{Theorem 4.1}
Let $C \ge 64$ be a constant and $C\ln{n}/n \le p \le C\sqrt{\ln{n}/n}$.
First, note that we expect $w$ will win against $\frac{C}{2}\ln{n}
= \frac{C\ln{2}}{2}\log{n}$ players in the first half.  Next,
we can show that with high probability $w$ wins against greater than
$\frac{C\ln{2}}{4}\log{n}$ players.  Let $k = \frac{C\ln{2}}{4}$.
\begin{align*}
&\Pr{w \text{ wins against $> k\log{n}$ players in the first half }}\\
&\quad\ge 1 - \exp\left(-\frac{(k\log{n})^2}{4k\log{n}}\right)\\
&\quad= 1 - \exp\left(-\frac{k\log{n}}{4}\right)\\
&\quad= 1 - \exp\left(-\frac{C\ln{2}\log{n}}{16}\right)\\
% &\ge 1 - 1/e^{C\ln{2}\log{n}/8}\\
&\quad= 1 - 1/n^{C/16}
\end{align*}

We can also argue that with probability at least $1 - 1/n^{C/6}$
$w$ wins against some player $v$ in the first $n/6$ players.
\begin{align*}
&\Pr{w \text{ wins against some } v \in [1,n/6]}\\
&\quad= 1 - (1-p)^{n/6}\\
&\quad= 1 - (1-(C\ln{n}/6)/(n/6))^{n/6}\\
&\quad\ge 1 - \exp(-C\ln{n}/6)\\
&\quad= 1 - 1/n^{C/6}
\end{align*}
where the inequality follows from the approximation
$(1-a/x)^{x} \le e^{-a}$ for $a > 0$.

% We can also argue that with probability at least $1- 1/n^{C/8}$,
% $w$ wins against some player $v$ in the first $n/4-c\sqrt{n\ln{n}}$
% players
% \begin{align*}
% \Pr{w \text{ wins against some } v \in [1,n/4-c\sqrt{n\ln{n}}]}
% &= 1- (1-p)^{n/4-c\sqrt{n\ln{n}}}\\
% &= 1- (1-C\ln{n}/n)^{n/4-c\sqrt{n\ln{n}}}\\
% &\ge 1- \exp\left(-\frac{C\ln{n}}{4+16c\sqrt{\ln{n}/n}+\frac{64c^2\ln{n}}
% {n-4c\sqrt{n\ln{n}}}}\right)\\
% % &\text{ for } n \ge 1024 \text{ and } c \le \sqrt{2}\\
% % &\ge 1-1/n^{C/8}
% \end{align*}
% where the last inequality follows from the approximation
% $(1-k/x)^x \le e^{-k}$ for $k>0$ for large $x$.  If we let
% $n \ge 1024$ and $c = 2$, then this probability is at least
% $1-1/n^{C/12}$.

In what follows, we will imagine swapping a set of $k\log{n}$ players, called $S$,
whom $w$ wins against from the first half (excluding $v$)
with $k\log{n}$ arbitrary players from the second half.
In this way, we can argue about the ``first half" and the
``second half" of players independently.
We'll argue that $v$ is an SE winner over the new ``first half"
of players, and that the inclusion of $k\log{n}$ strong players
whom $w$ beats, makes $w$ a superking over the new ``second half".

First, we argue that it is likely that $v$, whose rank is at most $n/6$,
will be an SE winner over the first
half.  In particular, with high probability,
$v$ will be a king over the first half of
players, who wins against at least $n/4$ players.
Note that we expect $v$ to win against at least
$n/3\cdot(1-p)+pn/6 - 1 = n/3 - C\ln{n}/6 -1$
players from the first half.
The out degree of $v$ is given by a random variable, which is the
sum of independent random variables, so we can bound the
probability that $out(v) < n/4$ using Lemma~\ref{lem:varchernoff}.
\begin{align*}
\Pr{out(v) \ge n/4} &\ge 1 - \exp\left(-\frac{(n/12-\frac{C}{6}\ln{n}-1)^2}
{2(n/3-\frac{C}{6}\ln{n}-1)}\right)\\
&> 1 - 1/n^{4}
\end{align*}
where the last inequality is a very loose bound on this
probability that takes effect for sufficiently large $n$.

% First, we argue that $v$, whose rank is at most $n/4-c\sqrt{n\ln{n}}$,
% will be an SE winner over the first
% half with high probability.  In particular, with probability at
% least $1-1/n^{2c^2}$, $v$ will be a king over the first half of
% players, who wins against at least $n/4$ players.
% Note that we expect $v$ to win against
% $n/4 + c\sqrt{n\ln{n}}\cdot (1-2p) -1$ players from the first half.
% The out degree of $v$ is given by a random variable, which is the
% sum of independent random variables, so we can bound the
% probability that $out(v) < n/4$ using Lemma~\ref{lem:chernoff}.
% \begin{align*}
% \Pr{out(v) \ge n/4} &\ge 1- \exp\left(-\frac{(c\sqrt{n\ln{n}}(1-2p))^2}
% {2(n/4+c\sqrt{n\ln{n}}(1-2p))}\right)\\
% &= 1- \exp\left(-\frac{c^2n\ln{n}(1-2p)^2}
% {n/2+2c\sqrt{n\ln{n}}(1-2p))}\right)\\
% &= 1 - n^{\left(2c^2(1-2p)^2 - 8c^3\sqrt{\ln{n}/n}(1-2p)^3
% +\frac{32c^4\ln{n}(1-2p)^4}{n+4c\sqrt{n\ln{n}}(1-2p)}\right)}
% % &\ge 1-\exp(-2c^2\ln{n})\\
% % &= 1-1/n^{2c^2}
% \end{align*}
Next, we consider the probability that $v$ is a king over the first half,
conditioned on its high out-degree.  We take a union bound
over all possible players who did not lose against $v$,
and show that it is unlikely that any of these players beat
every single player whom $v$ beat.
\begin{align*}
&\Pr{v \text{ is a king over the first half } \given out(v) \ge n/4}\\
&\quad \ge 1 - \sum_{i=1}^{n/4-1}(1-p)^{out(v)}\\
&\quad \ge 1- n/4\cdot(1-p)^{n/4}\\
&\quad \ge 1- n/4\cdot\exp(-C\ln{n}/4)\\
&\quad \ge 1- 1/4n^{C/4-1}
\end{align*}

Finally, we argue that with high probability,
$w$ will be a superking over the second half of players.
Consider some other $u$ from the second half of players.
% Note that the expected number of players from the first
% half who beat $u$ but lose to $w$ is $n/2\cdot p(1-p)$.
% Over the allowed range of $p$, this quantity is minimized
% when $p$ is minimized, i.e. when $p = C\ln{n}/n$.
% Thus, we can analyze a bound on the number of such edges
% into $u$ assuming that $p = C\ln{n}/n$, conditioned
% on the fact that $w$ wins against at least
% $k\log{n}$ players from the first half.
% The probability that $u$ loses to any of these strong players
% whom $w$ upsets is $(1-p)$.
Thus, the expected number of players from $S$
who beat $u$ is $\ge k\log{n}\cdot(1-p) = k\log{n}-\frac{kC\log^2{n}}{n}
\ge (k-1)\log{n}$ for sufficiently large $n$.
Applying Lemma~\ref{lem:varchernoff} again, we obtain the following bound.
\begin{align*}
&\Pr{u \text{ loses to fewer than } \log{n} \text{ players from } S}\\
&\quad\le \exp\left(-\frac{((k-2)\log{n})^2}{2(k-1)\log{n}}\right)\\
&\quad= \exp\left(-\frac{(k^2-4k+4)}{2(k-1)}\log{n}\right)\\
&\quad= n^{-\left(\frac{k^2-4k+4}{2\ln{2}\cdot (k-1)}\right)}
\end{align*}
Then, to guarantee that every $u$ in the second half loses to
at least $\log{n}$ players whom $w$ beats, we take a union bound
over the $n/2$ players.  For any $k > 11$, this probability will be
$\le 1/n^3$.

The overall probability that $w$ beats
a sufficiently strong king over the first half of players
is at least the following product.
\begin{align*}
&(1-1/n^{C/6})\cdot(1-1/n^{4})\cdot(1-1/4n^{C/4-1})\\
&\quad\ge1-1/n^{C/6}-1/n^{4}-1/4n^{C/4-1}\\
&\quad\ge 1 - 2/n^4
\end{align*}
Thus, the probability that some $w \in V$ wins against
$k\log{n}$ players from the (true) first half, and
wins against some $v$ strong king over the first half,
and is a superking over the second half is at least the
following.
\begin{align*}
(1-1/n^{C/16})\cdot(1-2/n^4)\cdot(1-1/n^3)
&\ge 1 - 2/n^3
\end{align*}
Taking a union bound over all players,
we conclude that with probability at least $1-1/\Omega(n^{2})$,
every player in the tournament will be an SE winner.
\end{proofof}

\subsection{Generalizing the CR Model for Tournaments}
As the prior claims demonstrate, in the standard CR Model,
every player is an SE winner with high probability, even when
upsets occur at an asymptotically minimal rate.
While this result indicates the depth of our understanding
of conditions under which a player is an SE winner,
it also suggests that the assumption that tournaments are drawn
from a CR Model -- where the noise parameter $p$ is fixed for
all matchups -- may be too rigid, incidentally providing
structure that may not exist in practical settings.
Prior work of \cite{tournwine} proposes a Generalized CR Model,
where for two players $i < j$, $j$ upsets $i$ with probability
$p \le p(i,j) \le 1/2$, for some globally specified $p$.
But even this model asserts that the probability of upsets
for \emph{every} edge must occur within the range of $[p,1/2]$.
We aim to relax our restrictions even further
in order to disrupt this structure inherent in the CR Model.

Consider the following generative model, which is parameterized
by a noise factor $p < 1/2$ and a participation factor $\Delta \le 1/2$.
The tournament on $n$ players is generated
as follows: pick any set of pairs of players $E'$ satisfying the condition that each player appears in at least $(1/2+\Delta)n$ such pairs; then, for every 
pair $\{u,v\}\in E'$, pick $(u,v)$ with probability $p_{u,v}\in [p,1-p]$, and $(v,u)$ otherwise. The probabilities $p_{u,v}$ can be arbitrary as long as they are in $[p,1-p]$.
%
%designate some edges in the graph to be randomly sampled
%as $(v,u)$ or $(u,v)$ independently with some probability in the range $[p,1-p]$.
%These edges satisfy the condition that each player in the graph is adjacent
%to at least $(1/2+\Delta)n$ such edges.
The remaining edges between players may be set arbitrarily.
In this new model, many typical
arguments used in analyzing CR tournaments, including those used
in the proof of Theorem~\ref{thm:cr-all}, which hinge on the precise
definition of the CR Model, break down.

Note that unlike the CR Model, the new model does not start with an
underlying ordering of players; however, such an ordering can
easily be emulated.  For instance, to emulate the CR Model, simply
choose an ordering $\sigma$, set $\Delta = 1/2$, and for all
$u,v$ such that $\sigma(u) < \sigma(v)$, sample $(u,v)$ with
probability $1-p$.  That said, because the model does not start with an
explicit ordering, it is much more versatile.
%, and can be used
%to model balanced tournaments, where players are of roughly equal
%strength.  
Moreover, because only a $(1/2+\Delta)$ fraction
of the edges are determined randomly, known structures can
be (adversarially) hard-coded into the resulting graphs.
In this sense, any results that we can say about tournaments
generated from this model are extremely general and will apply
broadly.  Despite this generality, we are able to give a statement
for our model mirroring that of \cite{virgiaaai} for the CR Model.

\begin{theorem}
Let $p > c\sqrt{\frac{\log{n}}{2\Delta n}}$ for some $c > 5$.
Then with probability $> 1-\Omega(n^{(c-5)/2\ln{2}})$,
every player in a tournament $T$ sampled from the aforementioned
model has an efficiently-computable winning seeding over $T$.
\label{thm:newmodel}\end{theorem}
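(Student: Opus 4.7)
The plan is to show that every player is a superking with high probability, and then invoke the superking theorem of \cite{virgiaaai} to conclude each such player is an SE winner via an efficiently computable seeding. Recall $v$ is a superking if every $u \in N_{in}(v)$ loses to at least $\log n$ members of $N_{out}(v)$; players with empty in-neighborhood are trivially SE winners.

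First I would fix an ordered pair of distinct players $(v,u)$ and count, for each $w$ in the intersection $R(v) \cap R(u)$ of their $E'$-neighborhoods, the event $\mathcal{E}_w$ that $v$ beats $w$ and $w$ beats $u$. The participation condition gives $|R(v)|,|R(u)|\ge (1/2+\Delta)n$, so inclusion-exclusion on $V\setminus\{u,v\}$ forces $|R(v)\cap R(u)|\ge 2\Delta n - O(1)$. Because edges in $E'$ are oriented independently and each orientation of an $E'$-edge occurs with probability at least $p$, the events $\{\mathcal{E}_w\}_{w\in R(v)\cap R(u)}$ are independent and each has probability at least $p^2$. Note that since $v\notin R(v)$ and $u\notin R(u)$, no care is needed to exclude the endpoints themselves from the sum.

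Next I would apply Lemma~\ref{lem:varchernoff} to $X=\sum_w \mathbf{1}[\mathcal{E}_w]$. The hypothesis $p \ge c\sqrt{\log n/(2\Delta n)}$ yields $E[X]\ge 2\Delta n\cdot p^2\ge c^2\log n$, and choosing $d=c^2\log n$ with $\delta=1-1/c^2$ gives $\Pr[X<\log n]\le \exp(-(1-1/c^2)^2 c^2\log n/2) = n^{-(c^2-2+1/c^2)/(2\ln 2)}$, whose exponent grows quadratically in $c$. A union bound over all $n(n-1)$ ordered pairs $(v,u)$ then shows that, with probability at least $1-n^{2-(c^2-2+1/c^2)/(2\ln 2)}$ (a bound easily dominating $1-O(n^{-(c-5)/(2\ln 2)})$ for $c>5$), for every $v$ and every $u\in N_{in}(v)$ at least $\log n$ members of $N_{out}(v)$ beat $u$, so every $v$ is a superking and the superking algorithm of \cite{virgiaaai} produces a winning seeding in polynomial time.

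The main obstacle is accommodating the adversarial freedom in both the choice of $E'$ and the orientations of non-$E'$ edges. The argument sidesteps this entirely by appealing only to edges inside $R(v)\cap R(u)$: the participation condition alone forces this intersection to have size $\Omega(\Delta n)$, which is enough to power the Chernoff bound, while edges outside the intersection (random or adversarial) are simply ignored. A secondary point is that the bare kingship condition (every $u$ at distance $\le 2$ from $v$) is subsumed by the superking witness count, so the same calculation establishes both properties simultaneously, and the trivial case $N_{in}(v)=\emptyset$ is handled separately since such a $v$ wins every match outright.
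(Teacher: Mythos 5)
Your proposal is correct and follows essentially the same route as the paper's proof: reduce to showing every player is a superking by applying Lemma~\ref{lem:varchernoff} over the common randomly-sampled neighborhood $W = R(v)\cap R(u)$ of size at least $2\Delta n$, union bound over pairs, and invoke the superking result of \cite{virgiaaai}. The only difference is that you apply the concentration bound once to the product events ($v$ beats $w$ and $w$ beats $u$, each of probability at least $p^2$) rather than the paper's two-stage argument (first lower-bounding $v$'s out-degree into $W$, then conditioning on it), which in fact yields a slightly stronger failure exponent --- quadratic rather than linear in $c$.
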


The proof of Theorem~\ref{thm:newmodel} is similar to the proof
of the analogous statement about the CR Model found in \cite{virgiaaai}.
It argues that with high probability every player in the tournament
will be a superking.
% The full proof is included in the supplement.

\begin{proofof}{Theorem~\ref{thm:newmodel}}
Let $p = c\sqrt{\frac{\log{n}}{2\Delta n}}$.  We will
argue that with high probability all nodes in a randomly
sampled tournament are superkings, so by \cite{virgiaaai}
they will be SE winners.
Let $T = (V,E)$ be a randomly sampled tournament.
We will bound the probability that $v \in V$ is not a
superking, namely, the probability that there exists
some $u \in V \setminus \set{v}$ such that $u$ loses to
fewer than $\log{n}$ players whom $v$ beats.

Let $u \in V \setminus \set{v}$.  Let $A_v$ be the
set of players $w$, for which the edge between $v$ and
$w$ was sampled randomly with probability in the range
$[p,1-p]$.  Let $A_u$ be defined analogously.  We let
$W = A_v \cap A_u$ be the players whose relation is sampled
randomly for both $v$ and $u$.
Note that we can lower bound the size of this intersection as
$\card{W} \ge (1/2+\Delta)n - 1 + (1/2+\Delta)n
- 1 - (n-2) = 2\Delta n$.
Now, note that the expected number of edges from $v$ into
$W$ is the sum of the probabilities that $(v,w)$ is an
edge for each $w \in W$, and thus is at least $2\Delta np$.
Applying Lemma~\ref{lem:varchernoff}, we can bound the
probability that this set of edges into $W$ is
smaller than $c\log{n}/p = 2\Delta np/c$.

\begin{align*}
&\Pr{\text{number of edges from $v$ into } W \le\frac{2\Delta np}{c}}\\
&\quad\le \exp\left(-(1-1/c)^2 \Delta np \right)\\
&\quad= \exp\left(-(1-1/c)^2 c\sqrt{\Delta n \log{n}/2}\right)\\
&\quad= 2^{-\Omega(\sqrt{n\log{n}})}
\end{align*}

Now, we'll condition on the fact that $v$ beats at least
$c\log{n}/p$ players from $W$.  Note that each of these
players beat $u$ with probability $\ge p$, so we expect
$\ge c\log{n}$ of these players to beat $u$.
Thus, using Lemma~\ref{lem:varchernoff} again,
we can bound the probability that $u$ does not lose to
at least $\log{n}$ of these players.
\begin{align*}
&\Pr{\text{number of edges from $W$ into } U \le \log{n}}\\
&\quad\le \exp\left(-(1-1/c)^2 c \log{n}/2\right)\\
&\quad= n^{-(1-1/c)^2c/2\ln{2}}
\end{align*}

Letting $C = (1-1/c)^2 c /2\ln{2} - 2$, by a union bound over
$v$'s opponents, the
probability that $v$ is not a superking is at most
$2^{-\Omega(\sqrt{n\log{n}})} + n^{-C-1}$.  Applying another
union bound over all players, the probability that there is
any player who is not a superking is at most $2^{\Omega(\sqrt{n\log{n}})}
+ n^{-C} \le O(n^{-C})$.  Hence with probability
$1-1/\Omega(n^{C})$, all nodes are superkings.  The result
follows from the fact that $C \ge (c-5)/2\ln{2}$.

\end{proofof}

\newpage
\bibliographystyle{alpha}
\bibliography{knockouts}

\newcommand{\etalchar}[1]{$^{#1}$}
\begin{thebibliography}{HDKW08}

\bibitem[AGM{\etalchar{+}}14]{haziz}
Haris Aziz, Serge Gaspers, Simon Mackenzie, Nicholas Mattei, Paul Stursberg,
  and Toby Walsh.
\newblock Fixing a balanced knockout tournament.
\newblock In {\em Proc. AAAI}, pages 552--558, 2014.

\bibitem[BBH15]{BrandtEtAlChapter}
Felix Brandt, Markus Brill, and Paul Harrenstein.
\newblock Tournament solutions.
\newblock In F.~Brandt, V.~Conitzer, U.~Endriss, J.~Lang, and A.~D. Procaccia,
  editors, {\em Handbook of Computational Social Choice}, chapter~3. Cambridge
  University Press, 2015.

\bibitem[BM08]{BrM08}
M.~Braverman and E.~Mossel.
\newblock Noisy sorting without resampling.
\newblock In {\em Proc. SODA}, pages 268--276, 2008.

\bibitem[HDKW08]{HDKW08}
N.~Hazon, P.E. Dunne, S.~Kraus, and M.~Wooldridge.
\newblock How to rig elections and competitions.
\newblock In {\em Proc. COMSOC}, 2008.

\bibitem[HDW07]{HDW07}
N.~Hazon, P.~E. Dunne, and M.~Wooldridge.
\newblock How to rig an election.
\newblock In {\em Proc. BISFAI}, 2007.

\bibitem[KV15]{kvijcai15}
M.~P. Kim and V.~{Vassilevska Williams}.
\newblock Fixing tournaments for kings, chokers, and more.
\newblock In {\em Proc. IJCAI}, pages 561--567, 2015.

\bibitem[Las97]{Laslier97}
J.~F. Laslier.
\newblock {\em Tournament solutions and majority voting}.
\newblock Springer, 1997.

\bibitem[LLL93]{LaffondLaLe93}
G.~Laffond, J.~F. Laslier, and M.~{Le Breton}.
\newblock The bipartisan set of a tournament game.
\newblock {\em Games and Economic Behavior}, 5(1):182--201, 1993.

\bibitem[LPR{\etalchar{+}}07]{lang07}
J.~Lang, M.~S. Pini, F.~Rossi, K.~B. Venable, and T.~Walsh.
\newblock Winner determination in sequential majority voting.
\newblock In {\em Proc. IJCAI}, pages 1372--1377, 2007.

\bibitem[MGK12]{mattei12}
N.~Mattei, J.~Goldsmith, and A.~Klapper.
\newblock On the complexity of bribery and manipulation in tournaments with
  uncertain information.
\newblock In {\em Proceedings of the 25th International Florida Artificial
  Intelligence Research Society Conference (FLAIRS 2012)}, 2012.

\bibitem[Rus10]{russellthesis}
T.~Russell.
\newblock A computational study of problems in sports.
\newblock {\em University of Waterloo PhD Disseration}, 2010.

\bibitem[Suk16]{Suksompong16}
Warut Suksompong.
\newblock Scheduling asynchronous round-robin tournaments.
\newblock {\em Operations Research Letters}, 44(1):96--100, 2016.

\bibitem[SV11a]{tournwine}
I.~Stanton and V.~{Vassilevska Williams}.
\newblock Manipulating stochastically generated single-elimination tournaments
  for nearly all players.
\newblock In {\em Proc. WINE}, pages 326--337, 2011.

\bibitem[SV11b]{tournijcai}
I.~Stanton and V.~{Vassilevska Williams}.
\newblock Rigging tournament brackets for weaker players.
\newblock In {\em Proc. IJCAI}, pages 357--364, 2011.

\bibitem[VAS09]{Vu09}
T.~Vu, A.~Altman, and Y.~Shoham.
\newblock On the complexity of schedule control problems for knockout
  tournaments.
\newblock In {\em Proc. AAMAS}, pages 225--232, 2009.

\bibitem[{Vas}10]{virgiaaai}
V.~{Vassilevska Williams}.
\newblock Fixing a tournament.
\newblock In {\em Proc. AAAI}, pages 895--900, 2010.

\end{thebibliography}

\end{document}